\newcommand{\R}{\mathbb{R}} 
\newcommand{\K}{\mathbb{K}} 
\newcommand{\N}{\mathbb{N}}
\newcommand{\Lie}{\mathcal{L}}
\newcommand{\M}{\mathbb{M}}
\DeclarePairedDelimiter{\abs}{\lvert}{\rvert}
\DeclarePairedDelimiter{\norm}{\lVert}{\rVert}
\DeclarePairedDelimiterX{\inp}[2]{\langle}{\rangle}{#1, #2}
\DeclarePairedDelimiter{\supp}{\textrm{supp}(}{)}
\DeclarePairedDelimiter{\Mp}{\mathcal{M}_+(}{)}
\newtheorem{thm}{Theorem}[section]
\newtheorem{rmk}{Remark} 
\title{\LARGE \bf
Peak Estimation and Recovery \\ with Occupation Measures
}
\author{Jared Miller$^1$, Didier Henrion $^2$, Mario Sznaier$^1$
\thanks{$^1$J. Miller and M. Sznaier are with the   ECE Department, Northeastern University, Boston, MA 02115. (e-mails: miller.jare@northeastern.edu,  msznaier@coe.neu.edu). They were partially supported by NSF grants  CNS--1646121, CMMI--1638234, ECCS--1808381 and CNS--2038493,  and AFOSR grant FA9550-19-1-0005.}
\thanks{$^2$D. Henrion is with LAAS-CNRS, Universit\'e de Toulouse, CNRS,  France and the Faculty of Electrical Engineering of the Czech Technical University in Prague, Czechia. (e-mail: henrion@laas.fr)}
}
\begin{document}

\maketitle
\pagestyle{plain}

\begin{abstract}
\label{sec:abstract}
Peak Estimation aims to find the maximum value of a state function achieved by a dynamical system. This problem is non-convex when considering standard Barrier and Density methods for invariant sets, and has been treated heuristically by using auxiliary functions. A convex formulation based on occupation measures is proposed in this paper to solve peak estimation. This method  is dual to the auxiliary function approach. Our method will converge to the optimal solution and can recover trajectories even from approximate solutions. This framework is extended to safety analysis by maximizing the minimum of a set of costs along trajectories.
\end{abstract}

\section{Introduction}
\label{sec:intro}

 The behavior of dynamical systems may be analyzed by bounding extreme values of state functions along trajectories. 
For a system with dynamics governed by an ODE $\dot{x} = f(t, x)$ with continuous $f$, let $x(t \mid x_0)$ denote a trajectory starting from an initial point $x_0$. 
The problem of finding the maximum value of a function $p(x)$ for trajectories starting from a set $X_0$ evolving over the time interval $[0, T]$ is \begin{equation}
    \label{eq:peak_traj}
    \begin{aligned}
    P^* = & \max_{t,\: x_0 \in X_0} p(x(t)) \\
    & \dot{x}(t) = f(t, x), \quad t \in [0, T]. \\
    \end{aligned}
\end{equation}

The goal of peak estimation is to approximate sharp upper bounds to $P^*$. It is also desired to recover the near-optimal trajectories that achieve $p(x(t \mid x_0)) \approx P^*$ for some time $t \in [0, T]$. 
Lower bounds to $P^*$ can be found by sampling an initial point $x_0 \in X_0$ and finding the maximum value of $p(x)$ along $x(t \mid x_0)$, but generating a sampled lower bound that is close to $P^*$ is difficult. 
Upper bounds of $P^*$ are universal properties of all trajectories, and $P^*$ may be sandwiched between discovered lower and upper bounds. Peak estimation may be infinite-time if $T = \infty$.

Problem \eqref{eq:peak_traj} was cast into an infinite-dimensional linear program (LP) of occupation measures in the context of optimal stopping problems of a martingale in \cite{cho2002linear}, and the bound $P^*$ was approximated by discretization with finite-dimensional LPs. The infinite dimensional LP in \cite{cho2002linear} is an extension to the stochastic setting of the deterministic optimal control formulation in \cite{lewis1980relaxation} with a state cost instead of a running cost. 
A survey of infinite-dimensional LP methods is available at \cite{fattorini1999infinite}, and LPs in occupation measures may also be solved through the moment-SOS hierarchy of Semidefinite Programs (SDP) \cite{lasserre2008nonlinear}.
More recently, an auxiliary function approach was developed 
to find a convergent sequence of upper bounds to $P^*$ by sum-of-squares (SOS) methods \cite{fantuzzi2020bounding}. The infinite-dimensional LP in \cite{fantuzzi2020bounding} that is truncated into an SOS program is dual to the infinite-dimensional LP in \cite{cho2002linear}. The optimal trajectories that achieve $P^*$ are localized into a sublevel set of the solved auxiliary function, and may be approximated through adjoint optimization \cite{gunzburger2002perspectives, fantuzzi2020bounding}. The SOS programs in \cite{fantuzzi2020bounding} are dual to Linear Matrix Inequalities (LMIs) in moments of occupation measures \cite{lasserre2008nonlinear, henrion2013convex}.

This paper has two main focuses: recovery and safety. A recovery algorithm is detailed to find the optimal trajectories from a peak estimation problem, which is based on solutions of occupation measure LMIs that satisfy an approximate rank constraint.  Peak estimation is extended to maximin objectives, which aim to maximize the minimum of a set of cost functions. Maximin peak estimation may be used to quantify a safety margin of trajectories with respect to an  unsafe set, complementing the binary safety determination of Barrier \cite{prajna2004safety, prajna2006barrier} and Density \cite{rantzer2004analysis} functions.

This paper is organized as follows: Section \ref{sec:prelim} defines notation and reviews preliminaries. Section \ref{sec:recovery} posits a recovery algorithm to extract near-optimal trajectories.
Section \ref{sec:merged_safety} provides a safety evaluation framework via maximin optimization.
Section \ref{sec:conclusion} concludes the paper. \section{Preliminaries}
\label{sec:prelim}
\subsection{Notation}
\label{sec:prelim_notation}

$\R$ is the space of real numbers, $\R^n$ is an $n$-dimensional real vector space, and $\R^n_+$ is its nonnegative real orthant. 
Let $x = (x_1 \ldots x_n)$ be a tuple of $n$ independent variables. Monomials may be expressed as $x^\alpha = \prod_{i=1}^n x_i^{\alpha_i}$ for a multi-index $\alpha \in \N^n$, where $\N$ is the set of natural numbers.
A polynomial $p(x)$ may be expressed as  $\sum_{\alpha \in \mathcal{I}} p_\alpha x^\alpha$ for a finite index set $\mathcal{I}$, and the degree of $p$ is the maximum $\abs{\alpha} = \sum_i \abs{\alpha_i}$ over all $\alpha \in \mathcal{I}$. $\R[x]$ is the ring of polynomials in $x$, and $\R[x]_{\leq d}$ is the subset of polynomials with total degree at most $d$.
A basic semialgebraic set is the locus of $N_c$ inequality constraints $g_k(x) \geq 0$ of bounded degree, with the form $\K = \{x \mid g_k(x) \geq 0 , \ k = 1\ldots N_c\}$.


Assume for this paper that $X \subset \R^m$ for some dimension $m$.
Let $C(X)$ be the set of continuous functions defined over $X$, and $C^1(X)$ be the subset of $C(X)$  with continuous first derivatives.
The space of finite signed Borel measures over $X$ is $\mathcal{M}(X)$, which is the topological dual of $C(X)$
with duality pairing $\inp{f}{\mu}$ = $\int_X f (x) d\mu(x) = \int_X f d\mu$ for $f\in C(X)$ and $\mu \in \mathcal{M}(X)$ if $X$ is compact.
The nonnegative subcones $C_+(X)$ and $\Mp{X}$  are dual cones with an induced inner product $\inp{\cdot}{\cdot}$ from the duality pairing.
If $B \subseteq X$ and $I_B(x)$ is the indicator function on $B$, then the measure of $B$ with respect to $\mu$ is $\mu(B) = \int_X I_B(x) d \mu = \int_B d \mu$. 
The support $\supp{\mu}$ is the smallest closed subset $S \subseteq X$ such that $\mu(X \setminus  S) = 0$. 
$\mu_1 \otimes \mu_2$ denotes the product measure formed by $\mu_1$ and $\mu_2$. $\mu$ is a probability measure on $X$ if $\mu(X) = 1$. The Dirac delta is a probability measure $\delta_x \in \Mp{X}$  with $\supp{\delta_x} = x$.

\subsection{Moment-SOS Hierarchy}
\label{sec:moment_sos}


Infinite-dimensional LPs may be defined over nonnegative measures $\mu$, such as 
\begin{equation}
\label{eq:meas_program} p^* =  \max_{\mu \in \Mp{X}} \inp{c}{\mu}, \qquad  \mathcal{A}(\mu) = b.
\end{equation}
In program \eqref{eq:meas_program}, $c \in \R[x] \subset C(X)$ is a cost, and $\mathcal{A}, b$ define a set of affine constraints in the moments of $\mu$. 
The $\alpha$-moment of a measure $\mu$ is the scalar quantity $y_\alpha = \inp{x^\alpha}{\mu}$.
A measure $\mu$ from \eqref{eq:meas_program} may be parameterized by an infinite sequence of moments $y = \{y_\alpha \mid \ \forall \alpha \in \N^n\} $ such that $\inp{p}{\mu} = \sum_{\alpha} p_\alpha y_\alpha$ for all  $p \in \R[x]$.  
Problem \eqref{eq:meas_program} may be expressed as a linear program in the infinite number of moments $y_\alpha$, which
must be truncated into a problem with a finite number of variables and constraints for tractable optimization. The moment-sum-of-squares (SOS) hierarchy uses  the moment sequence $y_\alpha$ with degree $\abs{\alpha} \leq d$ for some bound $d$ as variables. Refer to \cite{lasserre2009moments} for all material in this subsection. 

Let $X = \{x \mid g_k(x) \geq 0\}$ be a basic semialgebraic set with $N_c$ constraints where each $g_k(x)$ has degree $d_k$, and $\alpha, \beta, \gamma \in \N^m$ be a set of multi-indices. 
The infinite moment matrix $\M(y)$ and localizing matrices $\M(g_{k} y)$ for each inequality constraint are indexed by multi-indices $(\alpha, \beta)$, and have the forms,  
\begin{equation}
    \M(y)_{\alpha \beta} = y_{\alpha + \beta}, \quad \M(g_{k} y)_{\alpha \beta}  =  \textstyle\sum_{\gamma} g_{k \gamma} y_{\alpha + \beta + \gamma}.
\end{equation}

The degree-$d$ relaxation of problem \eqref{eq:meas_program} is:
\begin{subequations}
\begin{align}
\label{eq:mom_program}
    p^*_d &=  \ \max_{y} \textstyle\sum_\alpha c_\alpha y_\alpha, \qquad  A(y) = b_y \\
    & \M_d(y) \succeq 0, \; \M_{d - d_k}(g_{k} y) \succeq 0  \; \forall k = 1, \ldots, N_c.
\end{align}
\end{subequations}

The truncated moment matrix $\M_d(y)$ and localizing matrices $\M_{d-d_k}(g_k y)$ contain moments of up to order $2d$. The expression $A(y)=b_y$ in problem \eqref{eq:mom_program} imposes the affine condition $\mathcal{A}(\mu) = b$ on entries of the moment sequence $y$.
The optima of relaxations $p^*_{d} \geq p^*_{d+1} \geq p^*_{d+2} \ldots$ are a sequence of upper bounds to the true optimum $p^*$. 
This sequence will converge as $d \rightarrow \infty$ if $X$ is compact, or more specifically an Archimedean condition holds \cite{lasserre2009moments}. The size of $\M_d(y)$ (size of the PSD matrix variable of an LMI absent other structure) with $n$ variables at degree $d$ is $\binom{n+d}{d}$.

A measure is `rank-$r$ atomic' if it is supported on a set of $r$ discrete points called `atoms'. A moment matrix $\M_d(y)$ possesses a representing measure $\mu$ if the moments of $\mu$ up to order $2d$ agree with the entries of $\M_d(y)$. 
A necessary but not sufficient condition for a rank-$r$ atomic measure $\mu$ to exist is that $\M_d(y)$ has rank $r$. The sufficient condition requires the existence of a flat extension, in which the rank of truncated moment matrices is preserved as $d$ increases \cite{laurent2009generalized}.
The $r$ atoms may be recovered by a Cholesky decomposition of $\M_d(y)$, or by reading entries of $y_\alpha$ if $r=1$ \cite{henrion2005detecting}.


\subsection{Occupation Measures}
\label{sec:prelim_occ}

Occupation measures are a valuable tool in solving optimal control and reachable set problems. Resources on this topic include \cite{lasserre2008nonlinear, korda2014convex}. This section follows the exposition of \cite{josz2019transient}. For a single initial point $x_0 \in X_0$, the occupation measure $\mu(A \times B \mid x_0)$ is the amount of time the trajectory $x(t \mid x_0)$ spends in the region $A \times B \subseteq [0, T] \times X$:
\begin{equation}
\label{eq:occ_measure_single}
\mu(A \times B \mid x_0) = \int_0^T I_{A \times B}(t, x(t \mid x_0)) dt.
\end{equation}
\indent The average occupation measure $\mu$ yields the $\mu_0$-weighted time trajectories spend in $A \times B$ for some $\mu_0 \in \Mp{X_0}$:
\begin{equation}
\label{eq:occ_measure}
    \mu(A \times B) = \int_{X_0} \mu(A \times B \mid x_0) d \mu_0(x_0).
\end{equation}
\indent It holds that $\mu([0, T] \times X) = T$. The final occupation measure is the distribution of $x \in X$ that results after following initial conditions distributed as $\mu_0$ for time $T$
\begin{equation}
\label{eq:final_measure}
    \mu_T(B) = \int_{X_0} I_B(x (T \mid x_0)) d \mu_0(x_0).
\end{equation}
\indent 
For a test function $v(t, x) \in C^1([0, T]\times X)$, 
the Lie derivative operator $\Lie_f$ is defined
\begin{equation}
    \label{eq:lie_deriavite}
    \Lie_f v(t,x) = \partial_t v(t, x)+ \nabla_x v(t,x)^T f(t, x).
\end{equation}
\indent The three measures $\mu_0, \mu_T, \mu$ are linked together by the linear Liouville Equation $\delta_T \otimes\mu_T =  \delta_0 \otimes\mu_0 + \Lie_f^\dagger \mu$, which may be understood in a weak sense to hold
for all test functions $v(t,x) \in C^1([0, T] \times X)$,
\begin{equation}
\label{eq:liou_int}
    \inp{v(T, x)}{\mu_T} = \inp{v(0, x)}{\mu_0} + \inp{\Lie_f v(t,x)}{\mu}.
\end{equation}
The operator $\Lie_f^\dagger$ is the adjoint of $\Lie_f$ such that $\inp{\Lie_f v}{\mu}=\inp{v}{\Lie_f^\dagger \mu}$ for any $v(t,x) \in C^1([0, T] \times X)$. 
\subsection{Peak Estimation}
\label{sec:peak_occ}

Peak estimation problems can be bounded by an infinite-dimensional LP in measures by defining a peak measure $\mu_p \in \Mp{[0, T] \times X}$, which generalizes $\delta_T \otimes \mu_T$ with free terminal time.
Eq. (9) from \cite{cho2002linear} with variables $(\mu_0, \mu, \mu_p)$ can be restated as
\begin{subequations}
\label{eq:peak_meas}
\begin{align}
p^* = & \ \textrm{max} \quad \inp{p(x)}{\mu_p} \label{eq:peak_meas_obj} \\
    & \mu_p = \delta_0 \otimes\mu_0 + \Lie_f^\dagger \mu \label{eq:peak_meas_flow}\\
    & \mu_0(X_0) = 1 \label{eq:peak_meas_prob}\\
    & \mu, \mu_p \in \Mp{[0, T] \times X} \label{eq:peak_meas_peak}\\
    & \mu_0 \in \Mp{X_0}. \label{eq:peak_meas_init}
\end{align}
\end{subequations}
The probability measure $\mu_0$ in \eqref{eq:peak_meas_prob} is distributed over initial conditions. By Liouville's Equation \eqref{eq:peak_meas_flow}, $\mu_p$ is a probability measure over points in time and space:
\begin{equation}
\label{eq:liou_rel_one}
\inp{1}{\mu_p} = \inp{1}{\delta_0 \otimes \mu_0} + \inp{ \Lie_f (1)}{\mu} = 1 + 0 = 1.
\end{equation}
The dual problem to \eqref{eq:peak_meas} with variables $(v(t,x), \gamma)$ is
\begin{subequations}
\label{eq:peak_cont}
\begin{align}
    d^* = & \ \min_{\gamma \in \R} \quad \gamma & \\
    & \gamma \geq v(0, x)  & &  \forall x \in X_0 \label{eq:peak_cont_init}\\
    & \Lie_f v(t, x) \leq 0 & & \forall (t, x) \in [0, T] \times X \label{eq:peak_cont_f}\\
    & v(t, x) \geq p(x) & & \forall (t, x) \in [0, T] \times X \label{eq:peak_cont_p} \\
    &v \in C^1([0, T]\times X) &
\end{align}
\end{subequations}
and is formulated in Eq. 2.5 and 2.6 of \cite{fantuzzi2020bounding}. The auxiliary function $v(t,x)$ and scalar $\gamma$ are dual variables for constraints \eqref{eq:peak_meas_flow} and \eqref{eq:peak_meas_prob} respectively \cite{fantuzzi2020bounding}.
If $(v, \gamma)$ solves to \eqref{eq:peak_cont}, then the sublevel set $\{(t, x) \mid v(t, x) \leq \gamma\}$ contains all trajectories starting from $X_0$.

The measures in \eqref{eq:peak_meas} have bounded mass when $[0, T] \times X$ is compact (requires finite $T$), and the image of the cone of measures $\Mp{[0, T] \times X)^2} \times \Mp{X_0}$ through the linear mapping in \eqref{eq:peak_meas_flow} \eqref{eq:peak_meas_prob} is closed in the weak star topology. This implies the strong duality $p^* = d^*$ as shown in Theorem C.20 of \cite{lasserre2009moments}.
The solution $p^* = d^* \geq P^*$ is an upper bound for the true peak in \eqref{eq:peak_traj}. 
The solution $p^*$ is approximately equal to $P^*$ for compact $[0, T] \times X$ and locally Lipschitz dynamics (Thm 2.1 of \cite{lewis1980relaxation}, 2.5 of \cite{fantuzzi2020bounding}), and often $p^*=P^*$.
The objective values $d^*=P^*$ are  tight if the function $v(t,x)$ in \eqref{eq:peak_cont} is allowed to be discontinuous \cite{fantuzzi2020bounding}.


\cite{cho2002linear} estimates \eqref{eq:peak_meas} by discretizing the infinite-dimensional LP (sec. 4.1) or forming a Markov chain (sec. 4.2). \cite{fantuzzi2020bounding} finds a convergent sequence of upper bounds through  an SOS relaxation (Eq. 4.4-4.7).

For numerical examples in this paper, it is assumed that $p$ and the entries of $f$ are given polynomials, and
\vspace{-0.03in} 
\begin{subequations}
\begin{align}
    X = &\{x \mid \ g_k(x) \geq 0, \ \forall k = 1, \ldots, N_c\}\\
    X_0 = &\{x \mid \ g_{0k}(x) \geq 0, \forall k = 1, \ldots, N_c^0 \}
\end{align}
\end{subequations}
are compact basic semialgebraic sets.

\section{Recovery}
\label{sec:recovery}

    This section presents an algorithm to attempt extraction of optimal trajectories if $p^*$ is reached at $R$ points.
\subsection{Optimal Trajectories and Measures}
Each  of the $R$ solution trajectories to Problem \eqref{eq:peak_traj} that achieves $P^*$  may be encoded by a triple $(x_0^r, t_p^r, x_p^r)$ satisfying $P^* = p(x_p^r) = p(x(t_p^r \mid x_0^r))$ for $r = 1, \ldots, R$. A trajectory $x(t \mid x_0)$ in which $P^*$ is reached multiple times is separated into triples for each attainment. 

Let the triple $(x_0, t_p, x_p)$ be a solution to Problem \eqref{eq:peak_traj}. The probability measures $\mu_0 = \delta_{x_0}$, $\mu_p = \delta_{t_p}\otimes \delta_{ x_p}$, and $\mu$ defined by  Eq. \eqref{eq:occ_measure_single} with an endpoint $t_p$ instead of $T$ satisfy constraints \eqref{eq:peak_meas_flow}-\eqref{eq:peak_meas_init} with an objective value of $\inp{p}{\mu_p} = P^*$ (where $\mu$ is supported between $ (0, x_0)$ and $(t_p,x_p)$).
For the general case where $P^*$ is reached at multiple triples $(x_0^r, t_p^r, x_p^r)$, the measures $\mu_0 = \sum_{r=1}^r w_r \delta_{x^r_0}$,  $\mu_p = \sum_{r=1}^R w_r (\delta_{t_p^r}\otimes \delta_{ x_p^r})$ , and $\mu = \sum_{r=1}^R w_r \mu^r$ are feasible solutions to \eqref{eq:peak_meas_flow}-\eqref{eq:peak_meas_init} for all weights $w \in \R^{R}_+$ with $\mathbf{1}^T w = 1$ (convex combinations). Optimal trajectories may be recovered from the support of $\mu_0$ and $\mu_p$ solving \eqref{eq:peak_meas} if $p^*=P^*$.

\subsection{LMI Formulation}
Assume that the measures $\mu_0, \mu, \mu_p$ from \eqref{eq:peak_meas} have moment sequences of $y^0, y, y^p$ up to degree $2d$.  Liouville's equation in \eqref{eq:peak_meas_flow} implies that the following linear relation holds for each test function $v(t,x) = x^\alpha t^\beta$,
\begin{equation}
\label{eq:liou_mom}
    \inp{x^\alpha t^\beta}{ \delta_0 \otimes \mu_0} + \inp{\Lie_f( x^\alpha t^\beta)}{\mu} - \inp{x^\alpha t^\beta}{\mu_p} = 0.
\end{equation}
Define $\textrm{Liou}_{\alpha \beta}(y^0, y, y^p)$ as the relation in moment sequences from \eqref{eq:liou_mom} for each test function $x^\alpha t^\beta$.
The degree-$d$ LMI relaxation of \eqref{eq:peak_meas} with variables $(y^0, y, y^p)$ is 
\begin{subequations}
\label{eq:peak_lmi}
\begin{align}
    p^*_d = & \textrm{max} \quad \textstyle\sum_{\alpha} p_\alpha y_\alpha^p. \label{eq:peak_lmi_obj} \\
    & \textrm{Liou}_{\alpha \beta}(y^0, y, y^p) = 0 \quad \forall (\alpha, \beta) \in \N^{m+1}_{\leq 2d} \label{eq:peak_lmi_flow}\\
    & y^0_0 = 1 \\
    & \M_d(y^0), \M_d(y), \M_d(y^p) \succeq 0 \\
    &\M_{d - d_{0k}}(g_{0k} y^0) \succeq 0  \label{eq:peak_lmi_init} \ \; \forall k = 1, \ldots, N_c^0\\ 
    &\M_{d - d_k}(g_{k} y) \succeq 0  \, \  \quad \; \forall k = 1, \ldots, N_c \label{eq:peak_lmi_peak} \\ 
    &\M_{d - d_k}(g_{k} y^p) \succeq 0  \ \quad \forall k = 1, \ldots, N_c \\ 
    &\M_{d-2}(t(T-t)y),   \ \M_{d-2}(t(T-t)y^p) {\; \succeq \;} 0. \label{eq:peak_lmi_supp2}
\end{align}
\end{subequations}

\noindent Program \eqref{eq:peak_lmi} is dual to the degree-$d$ SOS program in \cite{fantuzzi2020bounding}. Localizing matrix constraints \eqref{eq:peak_lmi_init}-\eqref{eq:peak_lmi_supp2} enforce the measure support constraints in \eqref{eq:peak_meas_peak}-\eqref{eq:peak_meas_init}.

\subsection{Recovery Algorithm}

A solution to \eqref{eq:peak_lmi} at degree $d$ will yield an upper bound $p^*_d \geq p^*$. If the moment matrices $\M_d (y^0)$  and $\M_d (y^p)$ are rank-deficient and have a flat extension, then there  exist atomic representing measures $\mu_0, \ \mu_p$ whose measures agree with the moment sequences up to order $2d$. These representing measures may not necessarily solve \eqref{eq:peak_meas}, as there may not exist a $\mu$ supported on the graph of optimal trajectories with moments in $\M_d(y)$. 

The atoms of $\M_d (y^0)$  and $\M_d (y^p)$ with extraction by \cite{henrion2005detecting} (or reading $y^0, \ y^p$ if rank-1, which automatically implies existence of a flat extension) are candidates for optimal triples $(x_0^r, t_p^r, x_p^r)$. Evaluating $p(x)$ along a sampled trajectory starting at a feasible atom $x_0^r \in X_0$ from $\M_d (y^0)$ will yield a lower bound $p_d^r$ such that $p_d^r \leq p^* \leq p_d^*$.  If the lower and upper bound are sufficiently close together, then the trajectory starting at $x_0^r$ is approximately-optimal. Algorithm \ref{alg:recovery} describes the forward trajectory recovery algorithm. An alternative approach could take atoms from $\mu_p$ with $p^*_d  - p(x_p^{r}) \leq \epsilon$,  running $f$ backwards from $x_p^{r}$ for time $t_p^{r}$ and observing if the destination point is a member of $X_0$.  

This process assumes that the peak estimation problem takes an optimal value at a finite set of points, which in practice is not very restrictive. The rank-recovery process requires low rank moment matrices, is sensitive to numerical conditioning in the monomial basis as $d$ increases, and may not always succeed (e.g. $p^* - P^* > \epsilon$).

\SetKwFor{Loop}{Loop}{}{EndLoop}
\begin{algorithm}[ht]

         \textbf{Input :} Sets $X_0, \; \ X$, dynamics $f$, cost $p$, max. time $T$, initial degree $d_0$, tolerance $\epsilon$\\ 
         \textbf{Output :} Near-Optimal Trajectories $OPT$\\
         degree $d = d_0$, optimal triples $OPT = \varnothing$\\
         \Loop{}{
            Solve \eqref{eq:peak_lmi} at degree $d$ for $(p^*_d, \  \M_d(y^0))$ \\
            \uIf{$\M_d(y^0)$ has a flat extension}{
                \For{atoms $x_0^r$ in $\M_d(y^0)$ by \cite{henrion2005detecting}}{
                Simulate $x(t \mid x_0^r)$  \\
                Find $p^r_d = \max_{t\in[0, T]} p(x(t \mid x_0^r))$ \\
               Find $t_p^r, \ x_p^r$ on traj. with $p^r_d = p(x_p^r)$\\
                \uIf{$p^*_d - p^r_d < \epsilon$}{ Append $(x_0^r, t_p^r, x_p^r)$ to $OPT$}
                }
            }
            \Return $OPT$ \textbf{if} $OPT \neq \varnothing$\\
            $d \leftarrow d + 1$
         }
         \caption{Trajectory recovery}
         \label{alg:recovery}
        \end{algorithm}    


Example 4.1 from \cite{fantuzzi2020bounding} is the following system with a central symmetry and two stable attractors:
\begin{equation}
\label{eq:sym_attract_sys}
    \begin{bmatrix}
    \dot{x}_1 \\ \dot{x}_2
    \end{bmatrix} = 
    \begin{bmatrix}
    0.2 x_1 + x_2 - x_2(x_1^2+x_2^2) \\
    -0.4 x_1 + x_1(x_1^2+x_2^2)
    \end{bmatrix}
\end{equation}

For the  infinite-horizon problem (without variable $t$) of maximizing  $\norm{x}_2^2$ starting at $X_0  = \{x \mid \ \norm{x}_2^2 = 0.5\}, \; X = [-2, 2]^2$, \eqref{eq:peak_lmi} finds a bound $p^*_7 =1.90318$. The solved $\M_7(y_0), \M_7(y_p)$ are rank-2 up to a tolerance of $3\times10^{-4}$. When using Alg. \ref{alg:recovery}, $p^*_7$ is within $0.005$ of the sampled result $p_7^r$ of each atom.

Fig. \ref{fig:sym_attract_sys_traj} plots the optimal trajectory in dark blue and randomly sampled trajectories in cyan along with the level set $p(x) = p^*_7$ in the red dashed line.
The black dashed curve is the level set $\{x \mid v(x) = 0\}$.
Fig. \ref{fig:sym_attract_sys_init} compares the extracted $x_0^*\approx\pm(0.491,-0.093)$ (blue circles) and $x_p^*\approx\pm(0.481, 1.293)$ (blue stars) against a sublevel-set approximation to locations of optimal trajectories and their initial conditions (\cite{fantuzzi2020bounding} Sec. 3: $\{x \mid 0 \leq v(x) + p_7^* \leq 0.002, \ 0 \leq \Lie_f v(x) \leq 0.004\}$). 
\begin{figure}[!ht]
     \centering
     \begin{subfigure}[b]{0.53\linewidth}
         \centering
        \includegraphics[width = \linewidth]{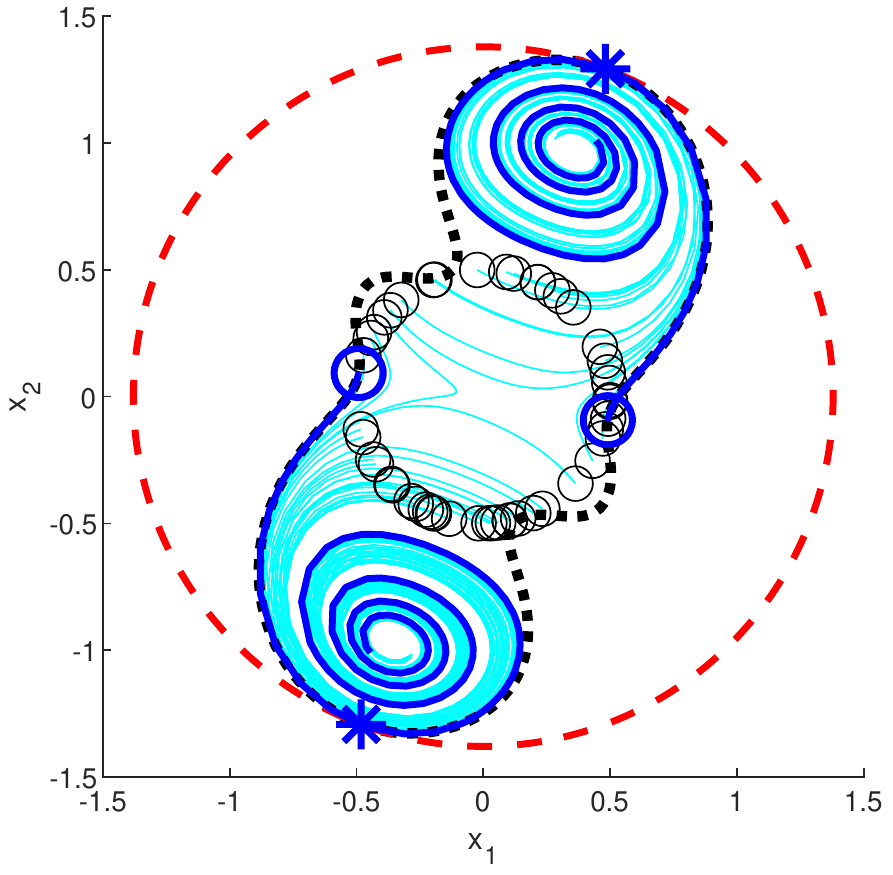}
         \caption{Trajectories and Bounds}
         \label{fig:sym_attract_sys_traj}
     \end{subfigure}
     \;
     \begin{subfigure}[b]{0.35\linewidth}
         \centering
         \includegraphics[width=\linewidth]{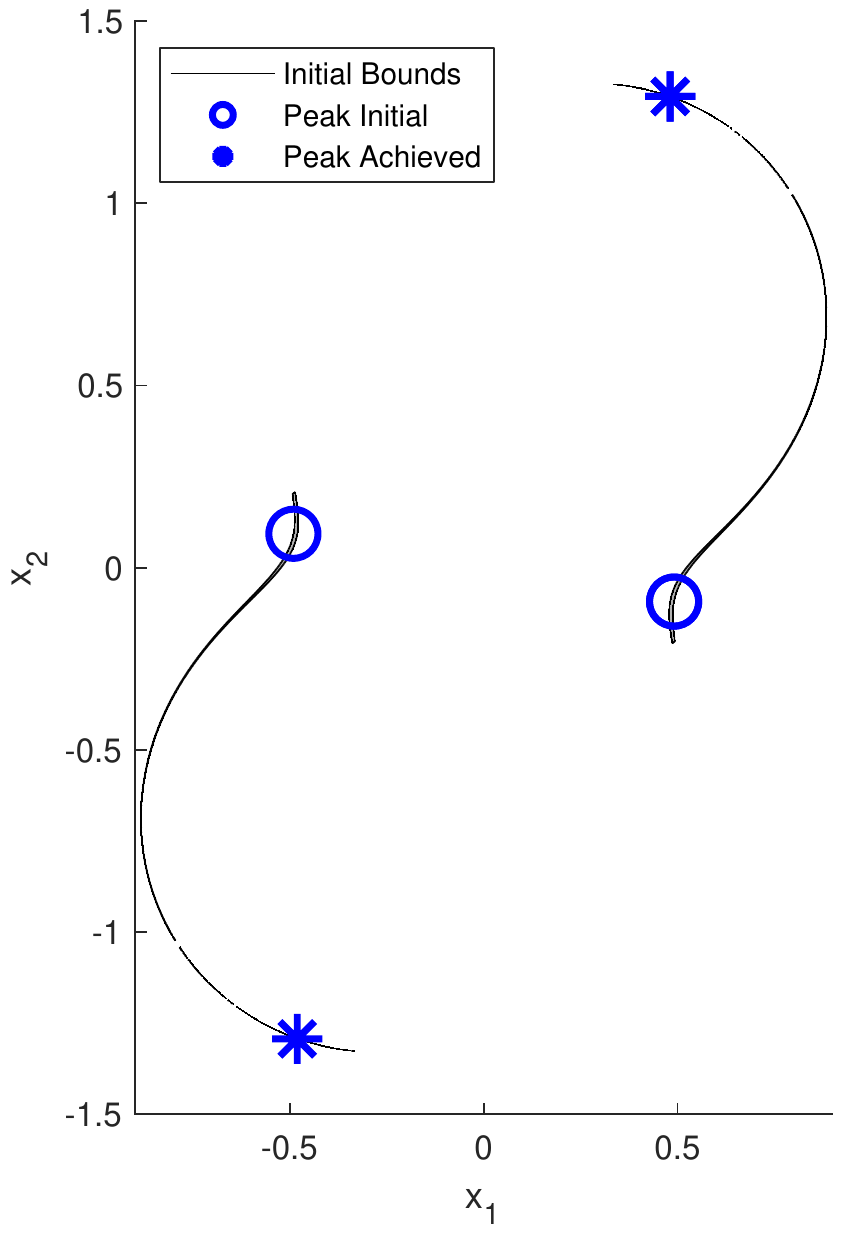}
         \caption{near-optimal set}
         \label{fig:sym_attract_sys_init}
     \end{subfigure}
         \caption{\label{fig:sym_attract_sys} Maximize $\norm{x}_2^2$ along \eqref{eq:sym_attract_sys}}
\end{figure}

Code in this paper is publicly available at
\url{https://github.com/jarmill/peak/} and was written in MATLAB 2020a. SDPs were formulated in YALMIP \cite{lofberg2004yalmip} and Gloptipoly 3 \cite{henrion2003gloptipoly},  and solved with Mosek 9.2.

\section{Maximin Safety Analysis}
\label{sec:merged_safety}
This section proposes a method to analyze safety of trajectories through maximin peak estimation.
\label{sec:safety}

\subsection{Safety Background}

Let $X_u\subset X$ be an unsafe compact basic semialgebraic set with description $X_u = \{x \mid p_i(x) \geq 0 \ \forall i = 1 \ldots N_u\}$. For example,  $X_u$ could represent a the location of a body of water in a driving task. Barrier functions for autonomous dynamics $\dot{x} = f(x)$ offer one method of certifying safety \cite{prajna2004safety} by attempting to find a $B(x) \in C^1(X)$ satisfying,
\begin{subequations}
\label{eq:barrier}
\begin{align}
    &B(x) > 0 & &  \forall x \in X_u\\
    &B(x) \leq 0 & & \forall  x \in X_0 \\
    &f(x) \cdot \nabla_x B(x) \leq 0 & & \forall  x \in X
\end{align}
\end{subequations}
A measure problem with variables $(\mu_0, \mu, \mu_u)$ is,
\begin{subequations}
\label{eq:unsafe}
\begin{align}
    &\mu_u = \mu_0 + \Lie_f^\dagger \mu \qquad \qquad  \mu_0(X_0) = 1 \\
    \mu_0 \in &\Mp{X_0}, \ \mu_u \in \Mp{X_u}, \ \mu \in \Mp{X}.
\end{align}
\end{subequations}

Problems \eqref{eq:barrier} and \eqref{eq:unsafe} are strong alternatives.
Feasibility of \eqref{eq:unsafe} implies that there exists at least one trajectory starting in $X_0$ and ending in the unsafe set $X_u$. If there exists a barrier function $B$ satisfying \eqref{eq:barrier}, then no trajectory starting from $X_0$ enters $X_u$. Finding a valid $B$ certifies safety but does not indicate proximity of trajectories to $X_u$. 

All points $x \in X_u$ satisfy $p_i(x) \geq 0$, and therefore $X_u = \{x \mid \min_i p_i(x) \geq 0\}$. If $\min_i p_i(x) < 0$ for all points on trajectories coming from $X_0$, then trajectories never enter or contact $X_u$. A negative maximum value of $\min_i p_i(x)$ on trajectories therefore certifies safety, which can be verified through peak estimation.

\subsection{Maximin Objective}
\label{sec:maximin}



Let $p(x) = [p_i(x)]_{i=1}^{N_p}$ be a polynomial vector of objectives. The maximin peak estimation problem is 
\begin{equation}
    \label{eq:peak_traj_multi}
    \begin{aligned}
    P^* = & \max_{t,\:x_0 \in X_0}  \ \min_i p_i(x) \\
    & \dot{x}(t) = f(t, x), \quad t \in [0, T]. \\
    \end{aligned}
\end{equation}

\begin{thm}
The maximin peak estimation problem \eqref{eq:peak_traj_multi} may be upper bounded by a measure program
\begin{subequations}
\label{eq:peak_meas_multi}
\begin{align}
    p^* = &\ \max \quad q &  \\
    & q +z_i =  \inp{p_i(x)}{\mu_p} \qquad \quad \forall i=1\ldots N_p \label{eq:peak_meas_multi_ineq}\\
    & \mu_p = \delta_0 \otimes\mu_0 + \Lie_f^\dagger \mu  \label{eq:peak_meas_multi_flow}& \\
    & \mu_0(X_0) = 1 \label{eq:peak_meas_multi_prob}& \\
    & q \in \R, \ z \in \R_+^{N_p} & \\
    & \mu, \mu_p \in \Mp{[0, T] \times X}&\\
    & \mu_0 \in \Mp{X_0}. & \label{eq:peak_meas_multi_init}
\end{align}
\end{subequations}
\end{thm}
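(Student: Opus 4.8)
The plan is to recognize \eqref{eq:peak_meas_multi} as the hypograph reformulation of maximizing a pointwise minimum, and then to lift every feasible trajectory of \eqref{eq:peak_traj_multi} to a feasible point of the measure program with a matching objective value. First I would observe that the slack variables $z \in \R_+^{N_p}$ in \eqref{eq:peak_meas_multi_ineq} merely impose $q \leq \inp{p_i(x)}{\mu_p}$ for every $i$, so that any feasible tuple $(q, z, \mu_0, \mu, \mu_p)$ has $q \leq \min_i \inp{p_i(x)}{\mu_p}$; conversely, for any $(\mu_0, \mu, \mu_p)$ satisfying \eqref{eq:peak_meas_multi_flow}--\eqref{eq:peak_meas_multi_init}, the choice $q = \min_i \inp{p_i(x)}{\mu_p}$ and $z_i = \inp{p_i(x)}{\mu_p} - q \geq 0$ is feasible. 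Hence \eqref{eq:peak_meas_multi} is equivalent to $\max\, \min_i \inp{p_i(x)}{\mu_p}$ over the same measure constraints --- that is, the single-objective measure relaxation \eqref{eq:peak_meas} with its linear objective $\inp{p}{\mu_p}$ replaced by the concave objective $\min_i \inp{p_i}{\mu_p}$.

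Next I would reuse verbatim the trajectory-to-measure construction recorded in Section~\ref{sec:recovery}. Fix any admissible choice in \eqref{eq:peak_traj_multi}: an initial point $x_0 \in X_0$, a time $t_p \in [0,T]$, and the resulting point $x_p = x(t_p \mid x_0) \in X$. Setting $\mu_0 = \delta_{x_0}$, $\mu_p = \delta_{t_p} \otimes \delta_{x_p}$, and letting $\mu$ be the occupation measure \eqref{eq:occ_measure_single} of the trajectory with endpoint $t_p$ in place of $T$, this triple satisfies the weak Liouville identity \eqref{eq:peak_meas_multi_flow} and the normalization \eqref{eq:peak_meas_multi_prob}, exactly as in the single-objective case. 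For this point $\inp{p_i(x)}{\mu_p} = p_i(x_p)$, so by the first paragraph the objective value of \eqref{eq:peak_meas_multi} attained there is $\min_i p_i(x_p) = \min_i p_i(x(t_p \mid x_0))$, which is precisely the value of this trajectory in \eqref{eq:peak_traj_multi}. Taking the supremum over all $x_0 \in X_0$ and $t_p \in [0,T]$ yields $p^* \geq P^*$, which is the assertion. (If $P^*$ is attained at several triples one may instead lift a convex combination of the corresponding Diracs and occupation measures, as in Section~\ref{sec:recovery}, but a single trajectory already suffices for the bound.)

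I do not expect a serious obstacle here, since the statement claims only an upper bound: no strong-duality or moment-SOS convergence argument is required --- those enter only when quantifying tightness of the relaxation or when running Algorithm~\ref{alg:recovery} on the atomic supports of $\mu_0$ and $\mu_p$. The two points needing care are both routine: that the slack reformulation of the pointwise minimum is exact, which is immediate from $z \geq 0$; and that the constructed $\mu$ genuinely satisfies the weak form \eqref{eq:liou_int} of Liouville's equation, which is inherited directly from the single-objective construction of Section~\ref{sec:recovery} and only uses that $[0,T] \times X$ contains the graph of the trajectory. If desired, I would also record the dual program --- the analogue of \eqref{eq:peak_cont} with auxiliary function $v(t,x)$, a scalar $\gamma$, and multipliers for the $N_p$ inequalities \eqref{eq:peak_meas_multi_ineq} --- but this is not needed for the stated bound.
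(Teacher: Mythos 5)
Your proposal is correct and follows essentially the same route as the paper's (much terser) proof: recognize the slack variables $z_i \geq 0$ as an exact hypograph reformulation of $\max \min_i \inp{p_i}{\mu_p}$, and inherit the trajectory-to-measure lifting from the single-objective program \eqref{eq:peak_meas} to conclude $p^* \geq P^*$. Your write-up simply makes explicit the two steps the paper leaves implicit.
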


\begin{proof}This is an extension to the measure program \eqref{eq:peak_meas} upper bounding \eqref{eq:peak_traj} with multiple costs. The value $q$ is a lower bound on $\inp{p_i}{\mu_p}$, and Program \eqref{eq:peak_meas_multi} aims to find the maximum such $q$. Nonnegative slack variables $z_i$ in \eqref{eq:peak_meas_multi_ineq} fill the gap between the bound $q$ and $\inp{p_i}{\mu_p}$.\end{proof}
Degree-$d$ LMI relaxations provide a decreasing sequence of upper bounds to $p^*$ in \eqref{eq:peak_meas_multi}.

The Lagrangian of \eqref{eq:peak_meas_multi} is
\begin{align}
\label{eq:multi_lagrangian}
    L = & \gamma(\inp{1}{\mu_p} - 1) + \textstyle\sum_{i=1}^{N_p}{ \alpha_i z_i + \beta_i (q + z_i - \inp{p_i}{\mu_p}}) \nonumber \\
    &+ \inp{v(t,x)}{\delta_0 \otimes \mu_0+\Lie_f^\dagger \mu - \mu_p} + q
\end{align}
with new dual variables $\beta \in \R^{N_p}$ from constraint \eqref{eq:peak_meas_multi_ineq} and $\alpha \in \R_+^{N_p}$ from the cone constraint $z \in \R_+^{N_p}$. After eliminating $\alpha$, the dual problem to \eqref{eq:peak_meas_multi} is:
\begin{subequations}
\label{eq:peak_cont_multi}
\begin{align}
    d^* = & \ \min_{\gamma \in \R} \quad \gamma & \\
    & \gamma \geq v(0, x) &  &  \forall x \in X_0 \label{eq:peak_cont_init_multi}\\
    & \Lie_f v(t, x) \leq 0 & &  \forall (t, x) \in [0, T] \times X \label{eq:peak_cont_f_multi}\\
    & v(t, x) \geq \beta^T p(x) & & \forall (t, x) \in [0, T] \times X \label{eq:peak_cont_p_multi} \\
    &v \in C^1([0, T]\times X)& \\
    &  \beta \in \R^{N_p}_{+}, \ \mathbf{1}^T \beta = 1&\label{eq:peak_cont_beta_multi}
\end{align}
\end{subequations}
Strong duality holds between \eqref{eq:peak_meas_multi} and \eqref{eq:peak_cont_multi} by Theorem C.20 of \cite{lasserre2009moments} when $[0, T] \times X$ is compact. 
\begin{rmk}
If a particular term $p_i(x)$ is minimal among $p(x)$ at optimality, then $z_i = 0$ and $\beta_i \neq 0$. The dual variable $\beta$ is located on an $N_p$-dimensional simplex, so a single-objective case will feature $\beta = 1$.
\end{rmk}




\subsection{Maximin Example}
An example of maximin estimation is the following non-autonomous ODE (Example 2.1 from \cite{fantuzzi2020bounding}):

\begin{equation}
\label{eq:time_var_sys}
    \begin{bmatrix}
    \dot{x}_1 \\ \dot{x}_2
    \end{bmatrix} = 
    \begin{bmatrix}
    x_2 t - 0.1 x_1 - x_1 x_2 \\ x_1 t - x_2 + x_1^2
    \end{bmatrix}
\end{equation}

Figure \ref{fig:time_var_sys} plots trajectories from equation \eqref{eq:time_var_sys} on the initial set $X_0=\{x \mid \  (x_1+0.75)^2+x_2^2=1\}$ and total set $X = [-3, 2]\times [-2, 2]$. When maximizing $p(x) = x_1$, over the time range $[0, 5]$ the first three bounds are:
\[p^{*}_{1:3} = [1.5473, \; 0.4981, \; 0.4931]\]
The second-largest eigenvalue of $\M_1(y) = 2.943\times 10^{-6}$, so the moment matrix is nearly rank-1 for atom extraction by Algorithm \ref{alg:recovery}. The near-optimal trajectory is displayed in Fig. \ref{fig:time_var_sys_single}
with $x_0^* = [-1.674, -0.383]$ and $x_p^* = [0.493, 0.029]$. 
With a maximin objective $p(x) = [x_1, \; x_2]$, the first three bounds are
\[p^{*}_{1:3} = [1.0765, \; 0.3905, \; 0.3891]\]
At $d=3$, the optimal $\beta = [0.647, 0.353]$ has both elements nonzero, as $p_1(x_p^*) = p_2(x_p^*) = p^{*}_3.$
Fig. \ref{fig:time_var_sys_multi_cost} displays the maximin objective $\min(x_1, x_2)$ along trajectories in Fig. \ref{fig:time_var_sys}.
$x_p^*$ is reached at time $t_p^* = 2.19$, which is indicated by the blue stars on Fig. \ref{fig:time_var_sys_multi_cost}.


\begin{figure}[ht]
     \centering
     \begin{subfigure}[b]{0.48\linewidth}
         \centering
         \includegraphics[width=\linewidth]{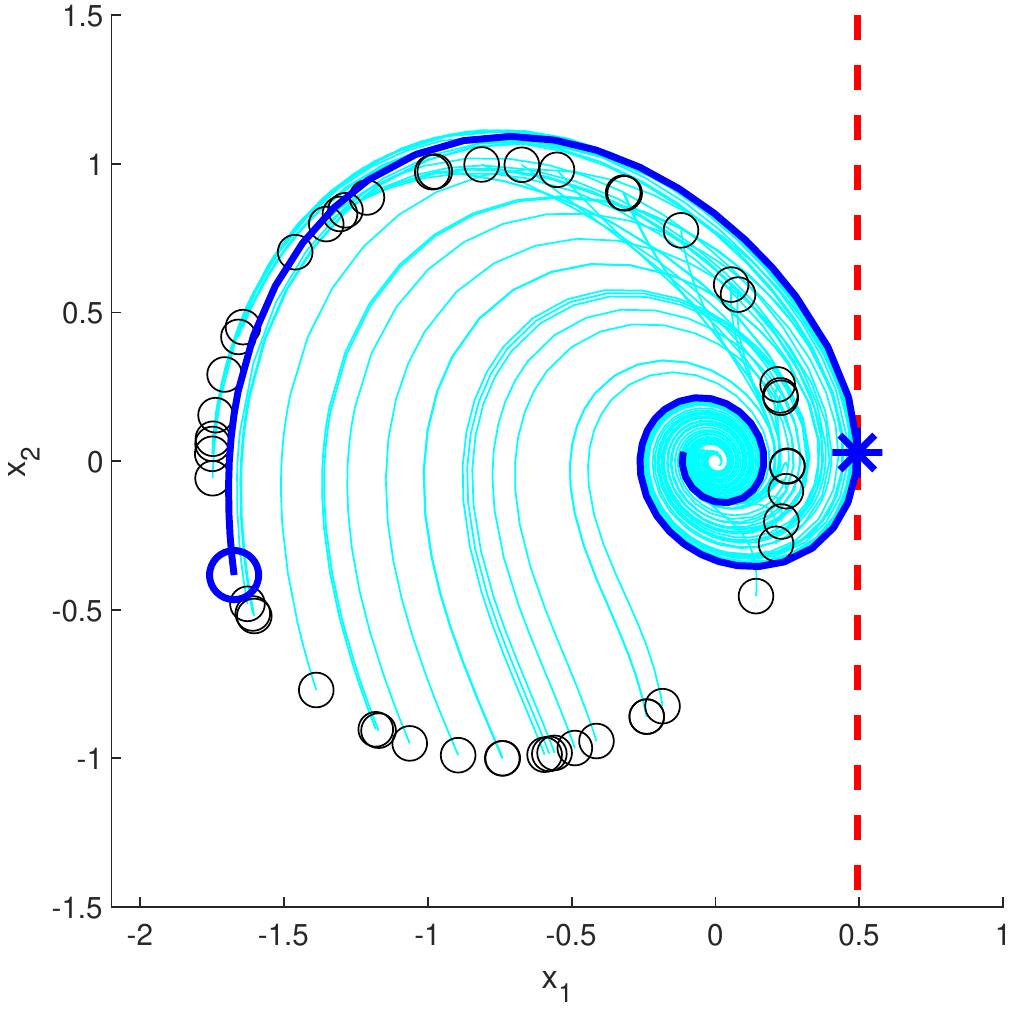}
         \caption{$\max x_1$}
         \label{fig:time_var_sys_single}
     \end{subfigure}
     \;
     \begin{subfigure}[b]{0.48\linewidth}
         \centering
         \includegraphics[width=\linewidth]{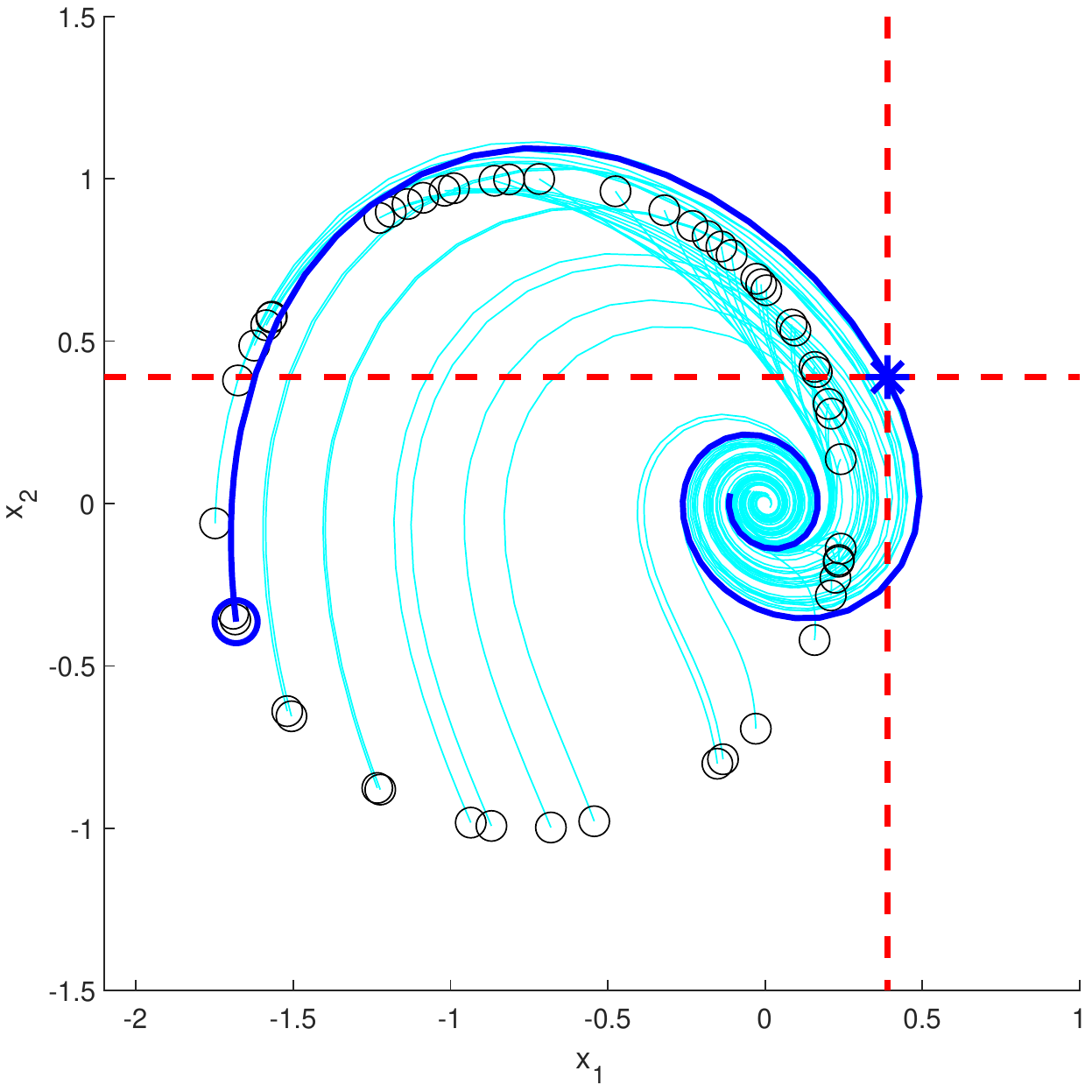}
         \caption{$\max \min(x_1, x_2)$}
         \label{fig:time_var_sys_multi}
     \end{subfigure}
         \caption{\label{fig:time_var_sys} Peak analysis of system \eqref{eq:time_var_sys} at $d=3$}
\end{figure}

\begin{figure}[ht]
    \centering
    \includegraphics[width=\linewidth]{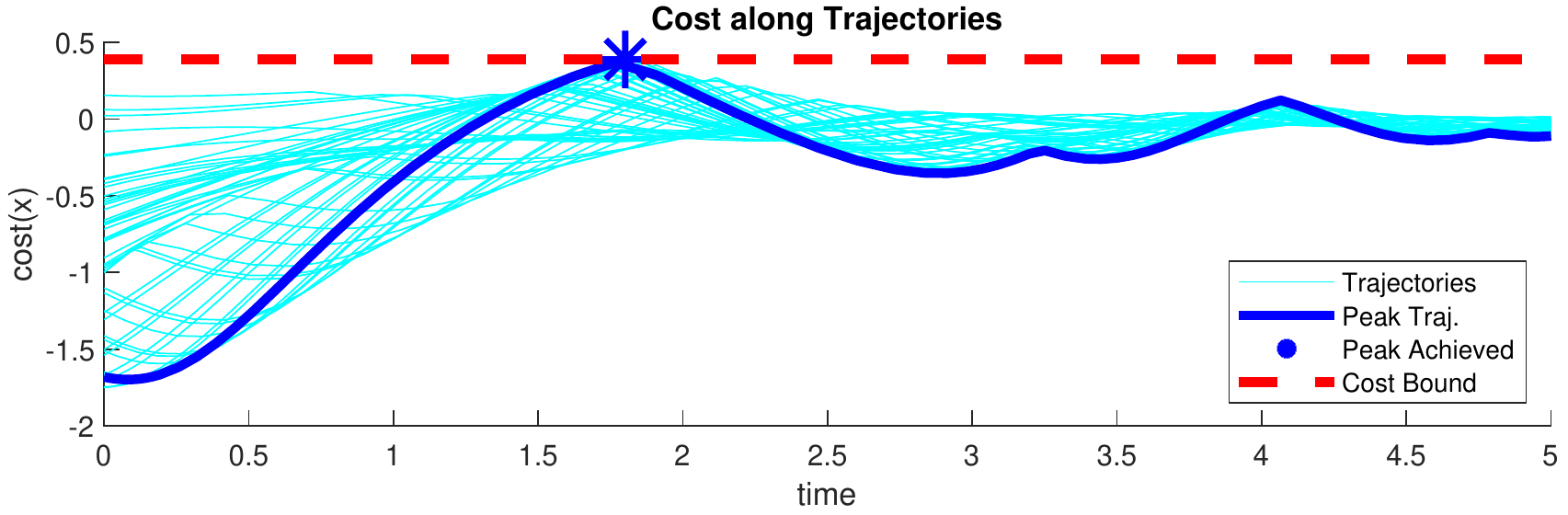}
    \caption{    \label{fig:time_var_sys_multi_cost} The value of  $\min(x_1, x_2)$ along trajectories \eqref{eq:time_var_sys}}
\end{figure}


\subsection{Safety Margins}

Let $P^*$ and $p_d^*$ be the optimum value of the maximin problem \eqref{eq:peak_traj_multi} with costs $[p_i(x)]_{i=1}^{N_u}$ from $X_u$ and \eqref{eq:peak_traj_multi}'s degree-$d$ LMI relaxation \eqref{eq:peak_meas_multi} respectively.
A bound $0 > p_d^* \geq P^*$ for some $d$ certifies safety, and this negative value $p_d^*$ serves as a `safety margin'. In contrast, a unsafe $P^* > 0$ implies that $p_d^* > 0$ for all degrees $d$.




Figure \ref{fig:flow_margin} demonstrates safety margins on the system $f(x) = [x_2, \  -x_1 -x_2 + x_1^3/3]$ from \cite{prajna2004safety} with an infinite-horizon. 
Trajectories originate from $X_0: (x_1 - 1.5)^2 + x_2^2 \leq 0.4^2$. The unsafe set is a red half-circle formed by a circle with radius $R_u = 0.5$ centered at $C_u = [0, -0.5]$ cut by a half-space $[\cos{\theta}, \sin{\theta}]'[x-C_u] \geq 0$ for some angle $\theta$. With $\theta =  5\pi/4$ in Fig. \ref{fig:flow_safe}, $p_{3:5}^* = [0.1178, -0.1326, -0.1417]$. All trajectories are safe because  $p_4^* < 0$.

The flow in Fig. \ref{fig:flow_unsafe} with $\theta = 3\pi/4$ is unsafe, as some trajectory (the approximate optimal trajectory with $d=5$ recovered by Alg. \ref{alg:recovery}) passes through $X_u$. Feasibility of the LMI associated with \eqref{eq:unsafe} with $\mu \in \Mp{[0, T] \times X}, \ \mu_u \in \Mp{[0, T] \times X_u}$ is generally a more reliable way to certify unsafety than safety margins with Alg. \ref{alg:recovery}, as the recovery routine requires (near) flat extensions.

\begin{figure}[t]
     \centering
     \begin{subfigure}[b]{0.48\linewidth}
         \centering
         \includegraphics[width=\linewidth]{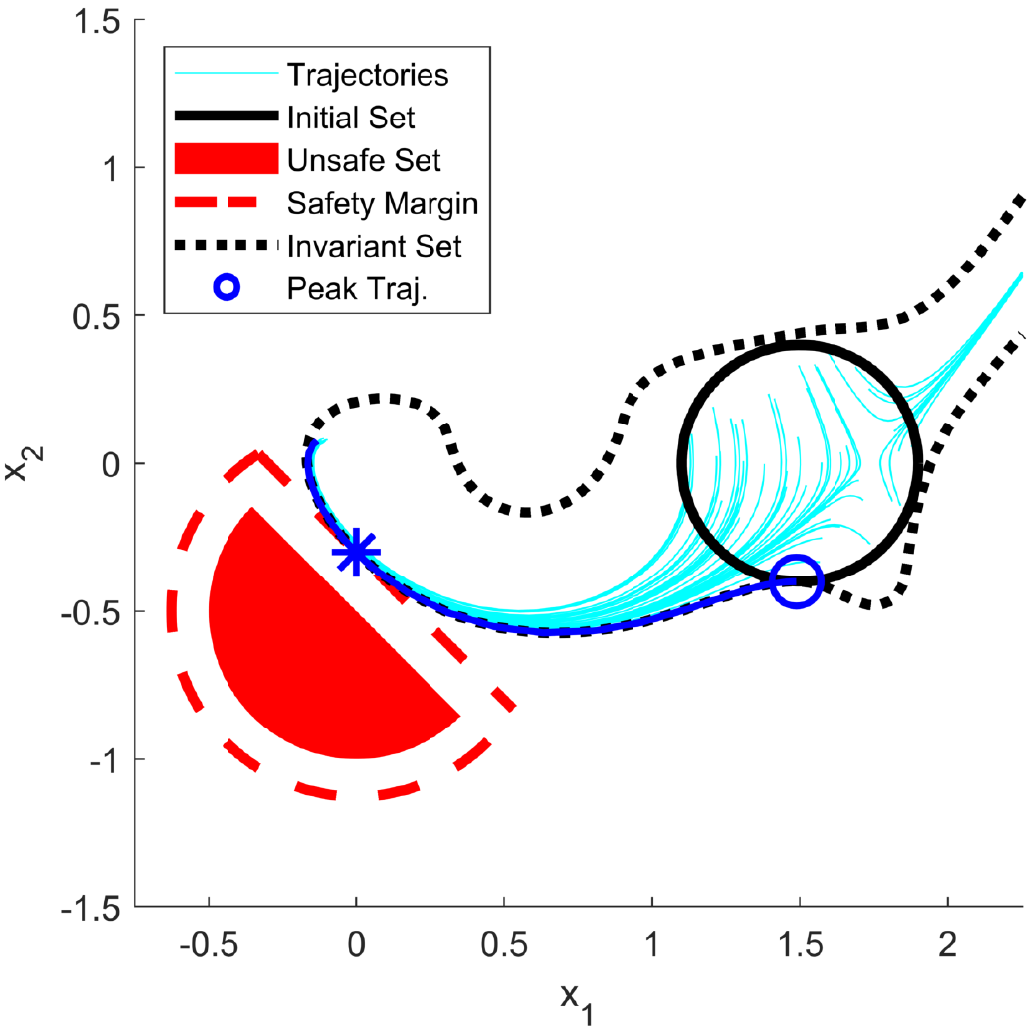}
         \caption{Safe: $p^*_5 = -0.1417 < 0$}
         \label{fig:flow_safe}
     \end{subfigure}
     \;
     \begin{subfigure}[b]{0.48\linewidth}
         \centering
         \includegraphics[width=\linewidth]{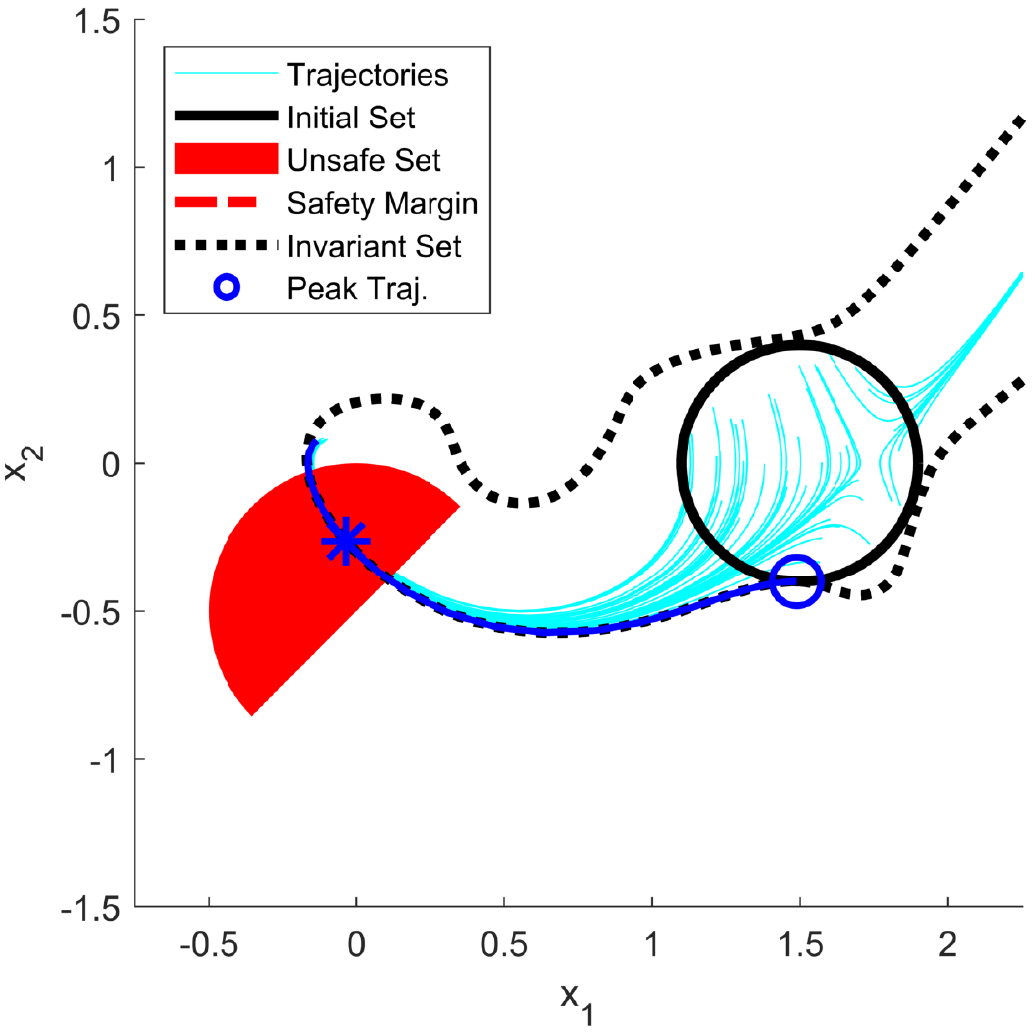}
         \caption{Unsafe: $p^*_5 = 0.1935 > 0$}
         \label{fig:flow_unsafe}
     \end{subfigure}
     \caption{\label{fig:flow_margin} Safety margins for half-circle sets}
\end{figure}




\section{Conclusion}
\label{sec:conclusion}

This paper presented a recovery algorithm to extract optimal trajectories of a peak estimation problem based on rank-deficiency of moment matrices. The peak estimation framework is extended to maximin objectives. A safety analysis framework is presented with maximin peak estimation, where a negative value of a safety margin is sufficient to certify safety of trajectories.
Future work includes bounding minimum distances to unsafe sets and peak estimation under uncertainty.

\section*{Acknowledgements}

The authors thank Milan Korda for his helpful discussions about the details of Algorithm \ref{alg:recovery}.

\bibliographystyle{IEEEtran}
\bibliography{peak_letter_reference.bib}

\begin{thebibliography}{10}
\providecommand{\url}[1]{#1}
\csname url@samestyle\endcsname
\providecommand{\newblock}{\relax}
\providecommand{\bibinfo}[2]{#2}
\providecommand{\BIBentrySTDinterwordspacing}{\spaceskip=0pt\relax}
\providecommand{\BIBentryALTinterwordstretchfactor}{4}
\providecommand{\BIBentryALTinterwordspacing}{\spaceskip=\fontdimen2\font plus
\BIBentryALTinterwordstretchfactor\fontdimen3\font minus
  \fontdimen4\font\relax}
\providecommand{\BIBforeignlanguage}[2]{{%
\expandafter\ifx\csname l@#1\endcsname\relax
\typeout{** WARNING: IEEEtran.bst: No hyphenation pattern has been}%
\typeout{** loaded for the language `#1'. Using the pattern for}%
\typeout{** the default language instead.}%
\else
\language=\csname l@#1\endcsname
\fi
#2}}
\providecommand{\BIBdecl}{\relax}
\BIBdecl

\bibitem{cho2002linear}
M.~J. Cho and R.~H. Stockbridge, ``Linear programming formulation for optimal
  stopping problems,'' \emph{SIAM Journal on Control and Optimization},
  vol.~40, no.~6, pp. 1965--1982, 2002.

\bibitem{lewis1980relaxation}
R.~Lewis and R.~Vinter, ``Relaxation of optimal control problems to equivalent
  convex programs,'' \emph{Journal of Mathematical Analysis and Applications},
  vol.~74, no.~2, pp. 475--493, 1980.

\bibitem{fattorini1999infinite}
H.~O. Fattorini, H.~O. Fattorini \emph{et~al.}, \emph{Infinite dimensional
  optimization and control theory}.\hskip 1em plus 0.5em minus 0.4em\relax
  Cambridge University Press, 1999, vol.~54.

\bibitem{lasserre2008nonlinear}
J.~B. Lasserre, D.~Henrion, C.~Prieur, and E.~Tr{\'e}lat, ``{Nonlinear Optimal
  Control via Occupation Measures and LMI-Relaxations},'' \emph{SIAM Journal on
  Control and Optimization}, vol.~47, no.~4, pp. 1643--1666, 2008.

\bibitem{fantuzzi2020bounding}
G.~Fantuzzi and D.~Goluskin, ``{Bounding Extreme Events in Nonlinear Dynamics
  Using Convex Optimization},'' \emph{SIAM Journal on Applied Dynamical
  Systems}, vol.~19, no.~3, pp. 1823--1864, 2020.

\bibitem{gunzburger2002perspectives}
M.~D. Gunzburger, \emph{Perspectives in flow control and optimization}.\hskip
  1em plus 0.5em minus 0.4em\relax SIAM, 2002.

\bibitem{henrion2013convex}
D.~Henrion and M.~Korda, ``{Convex Computation of the Region of Attraction of
  Polynomial Control Systems},'' \emph{IEEE TAC}, vol.~59, no.~2, pp. 297--312,
  2013.

\bibitem{prajna2004safety}
S.~Prajna and A.~Jadbabaie, ``{Safety Verification of Hybrid Systems Using
  Barrier Certificates},'' in \emph{International Workshop on Hybrid Systems:
  Computation and Control}.\hskip 1em plus 0.5em minus 0.4em\relax Springer,
  2004, pp. 477--492.

\bibitem{prajna2006barrier}
S.~Prajna, ``Barrier certificates for nonlinear model validation,''
  \emph{Automatica}, vol.~42, no.~1, pp. 117--126, 2006.

\bibitem{rantzer2004analysis}
A.~Rantzer and S.~Prajna, ``{On Analysis and Synthesis of Safe Control Laws},''
  in \emph{42nd Allerton Conference on Communication, Control, and
  Computing}.\hskip 1em plus 0.5em minus 0.4em\relax University of Illinois,
  2004, pp. 1468--1476.

\bibitem{lasserre2009moments}
J.~B. Lasserre, \emph{{Moments, Positive Polynomials And Their}
  {Applications}}, ser. Imperial College Press Optimization Series.\hskip 1em
  plus 0.5em minus 0.4em\relax World Scientific Publishing Company, 2009.

\bibitem{laurent2009generalized}
M.~Laurent and B.~Mourrain, ``A generalized flat extension theorem for moment
  matrices,'' \emph{Archiv der Mathematik}, vol.~93, no.~1, pp. 87--98, 2009.

\bibitem{henrion2005detecting}
D.~Henrion and J.-B. Lasserre, ``Detecting global optimality and extracting
  solutions in gloptipoly,'' in \emph{Positive polynomials in control}.\hskip
  1em plus 0.5em minus 0.4em\relax Springer, 2005, pp. 293--310.

\bibitem{korda2014convex}
M.~Korda, D.~Henrion, and C.~N. Jones, ``{Convex Computation of the Maximum
  Controlled Invariant Set For Polynomial Control Systems},'' \emph{SIAM
  Journal on Control and Optimization}, vol.~52, no.~5, pp. 2944--2969, 2014.

\bibitem{josz2019transient}
C.~Josz, D.~K. Molzahn, M.~Tacchi, and S.~Sojoudi, ``{Transient Stability
  Analysis of Power Systems via Occupation Measures},'' in \emph{2019 IEEE
  Power \& Energy Society Innovative Smart Grid Technologies Conference
  (ISGT)}.\hskip 1em plus 0.5em minus 0.4em\relax IEEE, 2019, pp. 1--5.

\bibitem{lofberg2004yalmip}
J.~{Lofberg}, ``Yalmip : a toolbox for modeling and optimization in matlab,''
  in \emph{ICRA (IEEE Cat. No.04CH37508)}, 2004, pp. 284--289.

\bibitem{henrion2003gloptipoly}
D.~Henrion and J.-B. Lasserre, ``{GloptiPoly: Global Optimization over
  Polynomials with Matlab and SeDuMi},'' \emph{ACM Transactions on Mathematical
  Software (TOMS)}, vol.~29, no.~2, pp. 165--194, 2003.

\end{thebibliography}
\end{document}